\newtheorem{problem}{Problem}
\newtheorem{definition}{Definition}
\newtheorem{theorem}{Theorem}
\newtheorem{assumption}{Assumption}
\title{\LARGE \bf
Mode Switching Decentralized Multi-Agent Coordination under Local Temporal Logic Tasks
%Global Satisfaction of of Multi-Agent Systems with 
}
\author{Christos K. Verginis and Dimos V. Dimarogonas% <-this % stops a space
\thanks{The authors are with the KTH Center of Autonomous Systems, School of Electrical Engineering and Computer Science, KTH Royal Institute of Technology, SE-100 44, Stockholm, Sweden. Email: {\tt\small \{cverginis, dimos\}@kth.se}. This work was supported by the H2020 ERC Starting Grant BUCOPHSYS, the European Union's Horizon 2020 Research and Innovation Programme under the GA No. 731869 (Co4Robots), the Swedish Research Council (VR), the Knut och Alice Wallenberg Foundation (KAW) and the Swedish Foundation for Strategic Research (SSF).% <-this % stops a space
}}
\begin{document}

\maketitle
\thispagestyle{empty}
\pagestyle{empty}

%%%%%%%%%%%%%%%%%%%%%%%%%%%%%%%%%%%%%%%%%%%%%%%%%%%%%%%%%%%%%%%%%%%%%%%%%%%%%%%%
\begin{abstract}
This paper presents a novel control strategy for the coordination of a multi-agent system subject to high-level goals expressed as linear temporal logic formulas. In particular, each agent, which is modeled as a sphere with $2$nd order dynamics, has to satisfy a given local temporal logic specification subject to connectivity maintenance and inter-agent collision avoidance. We propose a novel continuous control protocol that guarantees navigation of one agent to a goal point, up to a set of collision-free initial configurations, while maintaining connectivity of the initial neighboring set and avoiding inter-agent collisions. Based on that, we develop a hybrid switching control strategy that ensures that each agent satisfies its temporal logic task. Simulation results depict the validity of the proposed scheme.

\end{abstract}

%%%%%%%%%%%%%%%%%%%%%%%%%%%%%%%%%%%%%%%%%%%%%%%%%%%%%%%%%%%%%%%%%%%%%%%%%%%%%%%%
\section{INTRODUCTION}\label{sec:Introduction}

The integration of temporal logic planning and multi-agent control systems has gained significant amount of attention during the last decade, since it provides planning capabilities that allow achievement of complex goals (see e.g.,  \cite{Chen2012,Diaz2015,Fainekos2009,Filippidis2012,Loizou2004,Meng15,Tumova2014,Zhang2016,verginis2017distributed,verginis2017robust,guo2016communication,kloetzer2011multi,nikou17TimedTemporal,saha2014automated,ulusoy2013optimality,verginis18TASE}). Firstly, an abstracted discrete version (e.g., a transition system) of the multi-agent system is derived by appropriately discretizing the workspace and finding the control inputs that navigate the system among the discrete states. 
A task specification is then given as a temporal logic formula (e.g., linear temporal logic (LTL) or metric-interval temporal logic (MITL)) with respect to the discretized version of the system, and by employing formal verification techniques, a high-level discrete plan is found that satisfies the task. Finally, the control inputs associated with the transitions between the discrete states are applied to achieve the plan execution.

An appropriate abstraction of the continuous-time system to a transition system form necessitates the design of appropriate control inputs for the transition of the system among the discrete states.  Most works in the related literature, when designing such discrete representations, either assume that there exist such control inputs or adopt simplified dynamics and employ optimization and input discretization techniques. Moreover, when deploying multi-robot teams, it is crucial to guarantee inter-agent collision avoidance. The latter is usually not taken into account in the related works, most of which unrealistically consider point-mass agents. 

Collision avoidance properties during the multi-agent transitions are incorporated in \cite{Loizou2004} and \cite{Filippidis2012}, where the authors adopt single-integrator models and appropriately constructed potential fields, namely navigation functions. These results, however, are not extendable to higher order dynamics in a straightforward way and are based on gain tuning, which might be problematic for real robot actuators. They also guarantee the multi-agent transitions from \textit{almost} all (except for a set of measure zero) collision-free initial conditions, implying that there are initial configurations that drive the multi-agent system to local minima. Potential-based collision avoidance was also incorporated in our previous works \cite{verginis18TASE}, where a centralized controller was employed, and \cite{verginis2017robust}, where no explicit potential field was given. In the latter, the agents also start their transitions simultaneously, which induces a centralized feature to the scheme.

In this paper, we propose a novel hybrid control strategy for the coordination of a multi-agent system subject to complex specifications expressed as linear temporal logic (LTL) formulas over predefined points of interest in the workspace. We first use formal verification methodologies to derive a high-level navigation plan for each agent over these points that satisfies its LTL formula. Then, we design a continuous control protocol that guarantees the global navigation of an agent to a goal point while guaranteeing inter-agent collisions and connectivity maintenance of the initially connected agents. By ``global", we mean up to a set of collision-free and connected (in the sense of a connected communication graph) initial configurations. The control scheme is decentralized, based on limited sensing capabilities of the agents, as well as robust to modeling uncertainties. Finally, by introducing certain priority variables for the agents, we develop a switching protocol that guarantees the sequential navigation of the agents to their goal points of interest and the satisfaction of their respective formulas.

This work can be considered as an extension of  \cite{guo2016communication}, where a similar strategy is followed.  In \cite{guo2016communication}, however, point-mass agents are considered and no inter-agent collision avoidance is taken into account. Moreover, the multi-agent transitions are not guaranteed globally; appropriate gain tuning achieves transitions from almost all initial conditions, except for a set of measure zero.

%Here I will write that most previous works do either abstraction analysis instead of designing a controller to define abstraction or discretize the control input. With respect to Meng's work, I will emphasize that here we do not have point robots and we need to ensure collision avoidance apart from connectivity maintenance. Also our controller works from \textit{all} initial conditions (collision-free and connected) and is much simpler than Meng's potential field, which is \textit{almost-all-initial-conditions}-based with gain tuning.

The rest of the paper is organized as follows. Section \ref{sec:Notation and Preliminaries} introduces notation and preliminary background. Section \ref{sec:problem form} provides the problem formulation and Section \ref{sec:main results} discusses the proposed solution. Simulation results are given in Section \ref{sec:Simulation} and Section \ref{sec:Conclusion} concludes the paper.

\section{Notation and Preliminaries} \label{sec:Notation and Preliminaries}

\subsection{Notation} \label{subsec:Notation}
The set of natural and real numbers is denoted by $\mathbb{N}$, and $\mathbb{R}$, respectively, and  $\mathbb{R}_{\geq 0}$, $\mathbb{R}_{> 0}$ are the sets of nonnegative and positive real numbers, respectively. The notation $\|x\|$ implies the Euclidean norm of a vector $x\in\mathbb{R}^n$. The  identity matrix is $I_n\in\mathbb{R}^{n \times n}$ and, given a  sequence $s_1\dots s_n$ of elements in $S$, we denote by $(s_1\dots s_n)^\mathsf{\omega}$ the infinite sequence $s_1\dots s_n s_1\dots s_n\dots $ created by repeating $s_1\dots s_n$.

\subsection{Task Specification in LTL} \label{subsec:LTL}

We focus on the task specification $\phi$ given as a Linear Temporal Logic (LTL) formula. The basic ingredients of a LTL formula are a set of atomic propositions $\Psi$ and several boolean and temporal operators. LTL formulas are formed according to the following grammar \cite{baier2008principles}: $\phi ::= \mathsf{true}\: |\:a\: |\: \phi_{1} \land  \phi_{2}\: |\: \neg \phi\: |\:\bigcirc \phi\:|\:\phi_{1}\cup\phi_{2} $, where $a\in \Psi$, $\phi_1$ and $\phi_2$ are LTL formulas and $\bigcirc$, $\cup$ are the \textit{next} and \textit{until} operators, respectively. Definitions of other useful operators like $\square$ (\it always\rm), $\lozenge$ (\it eventually\rm) and $\Rightarrow$ (\it implication\rm) are omitted and can be found at \cite{baier2008principles}.
The semantics of LTL are defined over infinite words over $2^{\Psi}$. Intuitively, an atomic proposition $\psi\in \Psi$ is satisfied on a word $w=w_1w_2\dots$, denoted by $w\models\psi$, if it holds at its first position $w_1$, i.e. $\psi\in w_1$. Formula $\bigcirc\phi$ holds true if $\phi$ is satisfied on the word suffix that begins in the next position $w_2$, whereas $\phi_1\cup\phi_2$ states that $\phi_1$ has to be true until $\phi_2$ becomes true. Finally, $\lozenge\phi$ and  $\square\phi$ holds on $w$ eventually and always, respectively. For a full definition of the LTL semantics, the reader is referred to \cite{baier2008principles}.

\section{Problem Formulation} \label{sec:problem form}
Consider $N>1$ autonomous agents, with $\mathcal{N} \coloneqq \{1,\dots,N\}$, operating in $\mathbb{R}^n$ and described by the spheres $\mathcal{A}_i(x_i) \coloneqq \{y\in\mathbb{R}^n : \|x_i-y\| < r_i \}$, with $x_i \in\mathbb{R}^n$ being agent $i$'s center, and $r_i\in\mathbb{R}_{> 0}$ its bounding radius. 
We consider that there exist $K>1$ points of interest in the workspace, denoted by $c_k\in\mathbb{R}^n$,  $\forall k\in\mathcal{K}\coloneqq \{1,\dots,K\}$, with $\Pi \coloneqq \{c_1,\dots,c_K\}$.
Moreover, we introduce disjoint sets of atomic propositions $\Psi_i$, expressed as boolean variables, that represent services provided by agent $i\in\mathcal{N}$ in $\Pi$. The services provided at each point $c_k$ are given by the labeling functions $\mathcal{L}_i:\Pi\rightarrow2^{\Psi_i}$, which assign to each point $c_k$, $k\in\mathcal{K}$, the subset of services $\Psi_i$ that agent $i$ can provide in that region. Note that, upon the visit to $c_k$, agent $i$ chooses among $\mathcal{L}_i(c_k)$ the subset of atomic propositions to be evaluated as true, i.e., the subset of services it \textit{provides} among the available ones. These services are abstractions of action primitives that can be executed in different regions, such as manipulation tasks or data gathering. In this work, we do not focus on how the service providing is executed by an agent; we only aim at controlling the agents' motion to reach the regions where these services are available.

The agents' motion is described by the following dynamics, inspired by rigid body motion:
\begin{subequations} \label{eq:dynamics}
	\begin{align}	
	&\dot{x}_i = v_i, \\
	&B_i \dot{v}_i + f_i(x_i,v_i) + g_i  = u_i,
	\end{align}
\end{subequations}
where  $v_i\in\mathbb{R}^n$ are the agents' generalized velocities,   $B_i\in\mathbb{R}^{n\times n}$ are positive definite matrices representing inertia, $g_i\in\mathbb{R}^n$ are gravity vectors, $u_i\in\mathbb{R}^n$ are the control inputs, and $f_i:\mathbb{R}^{2n}\to\mathbb{R}^n$ are terms representing modeling uncertainties, satisfying the following assumption.
%The terms $\dot{M}_i(x_i)-2C_i(x_i,v_i)$ are skew-symmetric and we further make the following assumption regarding $f_i(\cdot)$:
\begin{assumption} \label{ass:f_i}
	It holds that $\|f_i(x_i,v_i)\| \leq a_i\bar{f}_i(x_i)\|v_i\|$, $\forall (x_i,v_i)\in\mathbb{R}^{2n}$, $i\in\mathcal{N}$, where $a_i$ are \textit{unknown} positive constants and $\bar{f}_i:\mathbb{R}^{2n}\to\mathbb{R}_{\geq 0}$ are known continuous functions.
\end{assumption}

Moreover, we consider that each agent has a certain priority $\mathsf{pr}_i\in\mathbb{N}$ in the multi-agent team, with higher $\mathsf{pr}_i$ denoting higher priority. Without loss of generality, we assume that these variables have been normalized so that $\exists i\in\mathcal{N} : \mathsf{pr}_i = 1$ and
$|\mathsf{pr}_\ell - \mathsf{pr}_j| = 1$, $\forall \ell,j\in\mathcal{N}$, with $\ell \neq j$. %For instance in a $4$-agent team, the assignment $\mathsf{pr}_1 = 3$, $\mathsf{pr}_2 = 1$, $\mathsf{pr}_3 = 4$, $\mathsf{pr}_4 = 2$ means that agent $2$ has the highest priority, followed by agent $4$, etc. 
The priority variables can be given off-line to the agents. 

In addition, we consider that each agent has a limited sensing radius $d_{\text{con},i}\in\mathbb{R}_{>0}$, with $d_{\text{con},i} > \max_{j\in\mathcal{N}}\{r_i+r_j\}$, which implies that the agents can sense each other without colliding.
%\begin{assumption} [sensing radii] \label{ass:sensing radii}
%	Every agent $i\in\mathcal{N}$ has a limited sensing radius $d_{\text{con},i}\in\mathbb{R}_{>0}$, and it holds that $d_{\text{con},i} \geq  \max_{j\in\mathcal{N}\backslash\{i\}} \{ \sup_{(x_i,x_j)\in \mathbb{M}^2} \{ \|p_i - p_j\| :  \partial \mathcal{A}_i(x_i)\cap\partial \mathcal{A}_j(x_j) \neq \emptyset\} \} + \varepsilon_{\text{con}}$,
%	for an arbitrarily small positive constant $\varepsilon_{\text{con}}\in\mathbb{R}_{>0}$.
%\end{assumption}
Based on this, we model the topology of the multi-agent network through the undirected graph $\mathcal{G}(x) \coloneqq (\mathcal{N},\mathcal{E}(x))$, with  $\mathcal{E}(x) \coloneqq \{(i,j)\in\mathcal{N}^2 : \|x_i - x_j \| \leq \min\{d_{\text{con},i}, d_{\text{con},j}\} \}$. We further denote $M(x)\coloneqq |\mathcal{E}(x)|$. Given the $m$ edge in the edge set $\mathcal{E}(x)$, we use the notation $(m_1,m_2)\in\mathcal{N}^2$ that gives the agent indices that form edge $m\in\mathcal{M}(x)$, where $\mathcal{M}(x)\coloneqq\{1,\dots,M(x)\}$ is an arbitrary numbering of the edges $\mathcal{E}(x)$. By also denoting $m_1$ as the tail and $m_2$ as the head of edge $m$, we define the $N\times M$ incidence matrix $D(\mathcal{G}(x)) \coloneqq [d_{im}]$, where $d_{im} = 1$ if $i$ is the head of edge $m$, $d_{im} = -1$ if $i$ is the tail of edge $m$, and $d_{im} = 0$, otherwise. Note that, for a connected graph $\mathcal{G}$, the sum of the rows of $D(\mathcal{G})$ equals zero.
Next, we assume that  the agents form initially a collision-free connected graph.
\begin{assumption} \label{ass:initially connected}
	The graph $\mathcal{G}(x(0))$ is nonempty, connected and $\mathcal{A}_i(x_i(0))\cap\mathcal{A}_j(x_j(0)) = \emptyset$, $\forall i,j\in\mathcal{N}$, with $i\neq j$.
\end{assumption}
As mentioned before, the agents, apart from satisfying their local LTL formulas, need to (a) preserve connectivity with their initial neighbors, and (b) guarantee inter-agent collision avoidance. 
%Hence, given  Assumption \ref{ass:initially connected}, the proposed methodology will guarantee that $\mathcal{G}(x(t))$ will remain connected as well as that no inter-agent collisions occur, $\forall t\in\mathbb{R}_{>0}$. 
More specifically, we will guarantee that the initial edge set $\mathcal{E}(x(0))$ will be preserved and that $\mathcal{A}_i(x_i(t))\cap\mathcal{A}_j(x_j(t)) = \emptyset$, $\forall i,j\in\mathcal{N}$, with $i\neq j$, $t\in\mathbb{R}_{>0}$.

%Another important property concerning multi-agent systems is inter-agent collision avoidance, which is commonly not taken into account in hybrid frameworks involving temporal logic specifications. We aim, therefore, along with the satisfaction of the individual formulas and the connectivity constraints, to guarantee inter-agent collision.
In order to proceed, we need the following definitions: 
\begin{definition} \label{def:agent in region}
	An agent $i\in\mathcal{N}$, at configuration $x_i\in\mathbb{R}^n$, can provide a service at a point $c_k\in\mathbb{R}^n$, among the set $\mathcal{L}_i(\pi_k)$, if $c_k \in \mathcal{A}_i(x_i)$. 									 
\end{definition}
\begin{definition}
	Let $x_i(t)\in\mathbb{R}^n$, $t\in\mathbb{R}_{\geq 0}$, be a trajectory of agent $i\in\mathcal{N}$. The \textit{behavior} of agent $i$ is the tuple $\beta_i \coloneqq (c_{i1},\sigma_{i1}),(c_{i2},\sigma_{i2}),\dots$, with $c_{i\ell}\in \Pi$, $\forall \ell\in\mathbb{N},i\in\mathcal{N}$, and $c_{i\ell}\in\mathcal{A}_i(x_i(t))$, $\forall t\in \Delta t_{i\ell}\coloneqq [t_{i\ell},t'_{i\ell}] \subset \mathbb{R}_{\geq 0}$, $t_{i\ell} < t'_{i\ell} < t_{i(\ell+1)}$, $c_{k}\notin \mathcal{A}_i(x_i(t))$, $\forall k\in\mathcal{K}, t\in(t'_{i\ell},t_{i(\ell+1)})$, $\sigma_{i\ell}\in 2^{\Psi_i}$, $\sigma_{i\ell}\in(\mathcal{L}_i(c_{i\ell})\cup\emptyset)$.
\end{definition}
Loosely speaking, a behavior consists of the sequence of points $c_{i1}c_{i2}\dots$ where agent $i$ can provide services at, at the time intervals $\Delta t_{i\ell},\ell\in\mathbb{N}$. In every point $c_{i\ell}$, agent $i$ chooses to provide the set $\sigma_{i\ell}$ of services among the $\mathcal{L}_i(c_{i\ell})$ available ones. Note that $\sigma_{i\ell}$ can be the empty set, implying that the agent may choose not to provide any services. Given the agent's behavior $\beta_i$, the satisfaction of a task formula $\phi_i$ is defined as follows:
\begin{definition}
	A behavior $\beta_i$ satisfies $\phi_i$ if there exists a subsequence $\widetilde{\sigma}_i$ $\coloneqq$ $\sigma_{k_{i1}}$ $\sigma_{k_{i2}}$ $\dots$ of $\sigma_{i1}$ $\sigma_{i2}$ $\dots$, with $k_{i1}$, $k_{i2}$, $\dots$ being a subsequence of $i1, i2,\dots$, such that $\widetilde{\sigma}_i \models \phi_i$.
\end{definition}

The problem treated in this paper is the following:

\begin{problem}
	Consider $N$ spherical autonomous agents with dynamics \eqref{eq:dynamics} and $K$ points of interest in the workspace. Given the sets $\Psi_i$ and $N$ LTL formulas $\phi_i$ over $\Psi_i$, as well as Assumptions \ref{ass:f_i}-\ref{ass:initially connected}, develop a decentralized control strategy that achieves behaviors $\beta_i$, that yield the satisfaction of $\phi_i$, $\forall i\in\mathcal{N}$, while guaranteeing inter-agent collision avoidance and connectivity maintenance, i.e., $\mathcal{A}_i(x_i(t))\cap\mathcal{A}_j(x_j(t)) \neq \emptyset$, $\forall i,j\in\mathcal{N}$, with $i\neq j$, and $\|p_{m_1}(t) - p_{m_2}(t) \| \leq \min\{d_{\text{con},m_1},d_{\text{con},m_2}\}$, $\forall t\in\mathbb{R}_{\geq 0}, m\in\mathcal{M}(x(0))$.
%	\begin{itemize}
%		\item $\mathcal{A}_i(x_i(t))\cap\mathcal{A}_j(x_j(t)) \neq \emptyset$, $\forall t\in\mathbb{R}_{\geq 0}$, $i,j\in\mathcal{N}$, with $i\neq j$,
%		\item $\|p_{m_1}(t) - p_{m_2}(t) \| \leq \min\{d_{\text{con},m_1},d_{\text{con},m_2}\}$, $\forall t\in\mathbb{R}_{\geq 0}, m\in\mathcal{M}(x(0))$.
%	\end{itemize}
\end{problem} 

\section{Main Results}\label{sec:main results}
In this section we present the proposed solution, which consists of three layers: (i) an off-line plan synthesis for the discrete plan of each agent, i.e., the path of the goal points and the sequence of services to be provided; (ii) a distributed continuous control scheme that guarantees the navigation of one of the agents to a goal point of interest from \textit{all} collision-free and connected (in the sense of $\mathcal{E}(x(0))$) initial configurations; (iii) a decentralized hybrid control layer that coordinates the discrete plan execution via continuous control law switching, to ensure the satisfaction of each agent's local task.

\subsection{Discrete Plan Synthesis} \label{subsec:discrete plan synthesis}

The discrete plan can be generated using standard techniques from automata-based formal synthesis. We first model the motion of each agent as a finite transition system $\mathcal{T}_i \coloneqq (\Pi',c_{i,0},\to_i,\Psi_i,\mathcal{L}_i)$, where $c_{i,0}$ represents the agent's initial position $x_i(0)$, $\Pi'\coloneqq\Pi\cup\{c_{i,0}\}$ is the set of points of interest defined in Section \ref{sec:problem form}, expanded to include $c_{i,0}$, $\to_i\coloneqq \Pi\times\Pi$ is a transition relation, and $\Psi_i$, $\mathcal{L}_i$ are the sets of atomic propositions and labeling function, respectively, as defined in Section \ref{sec:problem form}. Note that, by the definition of the transition relation, we consider that there can be transitions between any pair of points of interest. This is achieved in the continuous time  motion by the proposed control scheme of the subsequent section. Next, each agent $i\in\mathcal{N}$ translates the LTL formula $\phi_i$ into a B\"uchi automaton $\mathcal{A}_{\phi_i}$ and builds the product $\widetilde{T}_i \coloneqq \mathcal{T}_i\otimes \mathcal{A}_{\phi_i}$. The accepting runs of $\widetilde{T}_i$ (that satisfy $\phi_i$) are projected onto $\mathcal{T}_i$ and provide for each agent a sequence of points to be visited and services to be provided in the prefix-suffix form: $\mathsf{plan}_i \coloneqq \ (c_{i1^\text{G}}, \sigma_{i1^\text{G}}) \ \dots \ (c_{il_i^\text{G}}, \sigma_{il_i^\text{G}}) \ ((c_{i(l_i+1)^\text{G}}, \sigma_{i(l_i+1)^\text{G}})\dots$ $(c_{iL_i^\text{G}}, \sigma_{iL_i^\text{G}}))^\mathsf{\omega}$, where $l_i,L_i \in \mathbb{N}$, with $l_i < L_i$, and $c_{i\ell^\text{G}}\in \Pi$, $\sigma_{i\ell^\text{G}}\in 2^{\Psi_i}, (\mathcal{L}_i(c_{i\ell^\text{G}})\cup\emptyset)$,  $\forall \ell\in\{1,\dots,L_i\}$, $i\in\mathcal{N}$. More details regarding the followed technique are beyond the scope of this paper and can be found in \cite{baier2008principles}.  Note that, in our work, LTL formulas are interpreted over the
provided services along a trajectory, not the available ones. Hence, crossing of points of interest not included in $\mathsf{plan}_i$ (which might happen due to the collision and connectivity constraints, as explained in the next sections) does not influence the local LTL task satisfaction.

\subsection{Continuous Control Design}\label{subsec:continuous control}
In this section we propose a decentralized control protocol for the transition of the agents to the points of interest, while guaranteeing inter-agent collision-avoidance and connectivity maintenance. More specifically, given a collision-free and connected (i.e., connected graph $\mathcal{G}(x(t_0))$) configuration of the agents at a time instant $t_0\in\mathbb{R}_{\geq 0}$, the proposed control scheme guarantees that exactly one agent $j\in\mathcal{N}$ navigates to a desired point, while preserving connectivity of the initial edge set and avoiding inter-agent collisions. Loosely speaking, connectivity maintenance forces the whole multi-agent team to navigate towards the desired point of agent $j$, while avoiding collisions. This is motivated by potential cooperative tasks of the agents at the points of interest (e.g. object transportation). Then, the hybrid coordination of the next section guarantees that all the agents will eventually reach their desired goals by an appropriate switching protocol based on the priority functions $\mathsf{pr}_i$. 

Let the points $c_i\in\mathbb{R}^n$, $\forall i\in\mathcal{N}$, be some desired destinations of the agents. Consider the initial connected graph $\mathcal{G}_0=(\mathcal{N},\mathcal{E}_0)\coloneqq \mathcal{G}(x(t_0)) = (\mathcal{N},\mathcal{E}(x(t_0)))$, with $M_0 \coloneqq M(x(t_0))$ and edge numbering $\mathcal{M}_0 \coloneqq \mathcal{M}(x(t_0))$. Consider also
the complete graph $\bar{\mathcal{G}} \coloneqq (\mathcal{N},\mathcal{E})$, with $\bar{\mathcal{E}}\coloneqq \{ (i,j), \forall i,j\in\mathcal{N} \text{ with } i < j\}$, $\bar{M}\coloneqq |\bar{\mathcal{E}}|$, and the edge numbering $\bar{\mathcal{M}}\coloneqq  \{1,\dots,M_0,M_0+1,\dots,\bar{M}\}$, where $\{M_0+1,\dots,\bar{M}\}$ corresponds to the edges in  $\bar{\mathcal{E}}\backslash \mathcal{E}_0$. In other words, we assume that the numbering of the extra edges $\bar{\mathcal{E}}\backslash\mathcal{E}_0$ starts from $M_0+1$.

Next, we construct the collision functions for all the edges $m\in\bar{\mathcal{M}}$. Let  $\beta_{\text{col},m}:\mathbb{R}_{\geq 0}\to[0,\bar{\beta}_\text{col}]$, with
\begin{align}
\beta_{\text{col},m}(x) \coloneqq \left\{ \begin{matrix}
\vartheta_{\text{col},m}(x) & 0 \leq x < \bar{d}_{\text{col},m}, \\
\bar{\beta}_\text{col} &	 \bar{d}_{\text{col},m} \leq x
\end{matrix} \right.,
\end{align}
where $\vartheta_{\text{col},m}:\mathbb{R}_{\geq 0}\to[0,\bar{\beta}_\text{col}]$ is a continuously differentiable \textit{strictly increasing} polynomial that renders $\beta_{\text{col},m}$ continuously differentiable, with $\vartheta_{\text{col},m}(0) = 0$, $\vartheta_{\text{col},m}(\bar{d}_{\text{col},m}) = \bar{\beta}_\text{col}$, $\forall m\in\bar{\mathcal{M}}$, and $\bar{\beta}_\text{col} $, $\bar{d}_{\text{col},m}$ are positive constants to be appropriately chosen. 
%satisfying $\bar{\Delta}_{\text{col},m} \leq \widetilde{\Delta}_{m_1,m_2}$ where $\widetilde{\Delta}_{m_1,m_2}   \coloneqq \inf_{(x_{m_1},x_{m_2})\in \mathbb{M}^2}\{ \Delta_{m_1,m_2}(x_{m_1},x_{m_2}) : \|p_{m_1} - p_{m_2} \| =   \min\{d_{\text{con},m_1},d_{\text{con},m_2} \} \}$,
 Then, for each edge $m\in\bar{\mathcal{M}}$, we can choose $\beta_{\text{col},m} \coloneqq$ $\beta_{\text{col},m}(\iota_m)$, where $\iota_m\coloneqq\|p_{m_1}-p_{m_2}\|^2 - (r_{m_1}+r_{m_2})^2$ and $\bar{d}_{\text{col},m}\coloneqq \underline{d}^2_{\text{con},m}- (r_{m_1}+r_{m_2})^2$, $\underline{d}_{\text{con},m} \coloneqq \min\{d_{\text{con},m_1},d_{\text{con},m_2}\}$, that vanishes when a collision between agents $m_1,m_2$ occurs.  The term $\bar{\beta}_\text{col}$ can be any positive constant.

Next, we construct the connectivity functions for all the edges $m\in\mathcal{M}$. Let $\beta_{\text{con},m}:\mathbb{R}_{\geq 0}\to[0,\bar{\beta}_\text{con}]$, with 
\begin{align*}
\beta_{\text{con},m}(x) \coloneqq \left\{ \begin{matrix}
\vartheta_{\text{con},m}(x) & 0 \leq x < \underline{d}^2_{\text{con},m} \\
\bar{\beta}_\text{con} &	 \underline{d}^2_{\text{con},m} \leq x \\
\end{matrix} \right.,
\end{align*}
where $\vartheta_{\text{con},m}:$ $\mathbb{R}_{\geq 0}$ $\to$ $[0,\bar{\beta}_\text{con}]$ is a cont. differentiable \textit{strictly increasing} polynomial that renders $\beta_{\text{con},m}$ continuously differentiable, with $\vartheta_{\text{con},m}(0) = 0$, $\vartheta_{\text{con},m}(\underline{d}^2_{\text{con},m}) = \bar{\beta}_\text{con}$, $\forall m\in\mathcal{M}$. Then, for each edge $m\in\mathcal{M}$, we choose $\beta_{\text{con},m} \coloneqq \beta_{\text{con},m}(\eta_m)$, with $\eta_m\coloneqq \underline{d}^2_{\text{con},m} - \|p_{m_1}-p_{m_2}\|^2$, that vanishes at a connectivity break of edge $m$. The term $\bar{\beta}_\text{con}$ can be any positive constant. 
The aforementioned functions take into account the limited sensing capabilities of the agents, since the derivatives of $\beta_{\text{col},m}$ and $\beta_{\text{con},m}$ are zero when $\|p_{m_1}-p_{m_2}\| \geq \underline{d}_{\text{con},m}$, $\forall m\in\bar{\mathcal{M}}$.
Note that all the necessary parameters for the construction of $\beta_{\text{col},m}$, $\beta_{\text{con},m}$ can be transmitted off-line to the agents $m_1,m_2$. 
Similarly to \cite{guo2016communication}, we propose now the following decentralized control scheme, parameterized by the goal and mode of the agents:
\begin{align}
	&u_i(c_i,\mathsf{md}_i) \coloneqq \sum\limits_{m\in\bar{\mathcal{M}}}\alpha_{\text{col}}(i,m) \beta'_{\text{col},m}\frac{\partial \iota_m}{\partial x_{m_1}} + \notag \\
	& +\sum\limits_{m\in\mathcal{M}_0}\alpha_\text{con}(i,m)  \beta'_{\text{con},m}\frac{\partial \eta_m}{\partial x_{m_1}} - \mathsf{md}_i  \gamma_i(c_i)+ g_i    \notag \\
	& -\Big(\hat{a}_i\bar{f}_i(x_i) + \mu_i\Big)v_i,	\label{eq:control law}
\end{align}
where $c_i\in\mathbb{R}^n$ is agent $i$'s desired destination, $\mathsf{md}_i\in\{0,1\}$ is the agent's mode (active or passive); the functions $\alpha_{\text{col}}$, $\alpha_{\text{con}}$ are defined as $\alpha_\text{col}(i,m) = -\mu_{\text{col},m}$ if $i=m_1$ (agent $i$ is the tail of edge $m$), $\alpha_\text{col}(i,m) = \mu_{\text{col},m}$ if $i=m_2$ (agent $i$ is the head of edge $m$), and $\alpha_\text{col}(i,m) = 0$ otherwise, $\alpha_\text{con}(i,m) = -\mu_{\text{con},m}$ if $i=m_1$, $\alpha_\text{con}(i,m) = \mu_{\text{con},m}$ if $i=m_2$, and $\alpha_\text{con}(i,m) = 0$ otherwise, $i\in\mathcal{N}$; $\beta'_{\text{col},m} \coloneqq \frac{\partial }{\partial \iota_m}\left(\frac{1}{\beta_{\text{col},m}(\iota_m)}\right)$, $\beta'_{\text{con},m} \coloneqq \frac{\partial }{\partial \eta_m}\left(\frac{1}{\beta_{\text{con},m}(\eta_m)}\right)$, $\gamma_i(c_i)\coloneqq \mu_{c,i}(x_i - c_i)$; the constants $\mu_{\text{col},m}, \mu_{\text{con},m},\mu_{c,i}, \mu_i\in\mathbb{R}_{>0}$ are positive gains, $\forall m\in\bar{\mathcal{M}}$, $m\in\mathcal{M}_0$, $i\in\mathcal{N}$,
and the terms $\hat{a}_i$ are adaptation signals that evolve according to 
\begin{equation}
	\dot{\hat{a}}_i = \mu_{a,i}\bar{f}_i(x_i)\|v_i\|^2, \label{eq:adaptation laws}
\end{equation}
with arbitrary bounded initial conditions $\hat{a}_i(t_0)$, and positive gains $\mu_{a,i}\in\mathbb{R}_{>0}$, $\forall i\in\mathcal{N}$. 
%Note that $\frac{\partial \eta_m}{\partial m_1} = - \frac{\partial \eta_m}{\partial m_2}$ and $\frac{\partial \iota_m}{\partial m_1} = - \frac{\partial \iota_m}{\partial m_2}$. T
The intuition behind the parameters $\mathsf{md}_i$ is that only one of them can be true at time, meaning that only one agent navigates towards its desired point. After a successful navigation,  the variable is activated for another agent, and so on. Section \ref{subsec:hybrid strategy} describes the coordination strategy that decides about the activation of the variables $\mathsf{md}_i$. The navigation of the agent $j$ for which $\mathsf{md}_j = 1$ is guaranteed by the next theorem. 
	
\begin{theorem}
	Consider a multi-agent team $\mathcal{N}$, described by the dynamics \eqref{eq:dynamics}, at a collision-free and connected configuration at $t=t_0\in\mathbb{R}_{\geq 0}$, with desired destinations $c_i$, $\forall i\in\mathcal{N}$. Then, under Assumptions \ref{ass:f_i}-\ref{ass:initially connected},  the application of the control laws \eqref{eq:control law} with $u_j = u_j(c_j,1)$ for a $j\in\mathcal{N}$ and $u_i=u_i(c_i,0)$, $\forall i\in\mathcal{N}\backslash\{j\}$ guarantees that $c_j\in\mathcal{A}_j(x_j(t_f))$ for a finite $t_f$, as well as $\mathcal{A}_i(x_i(t))\cap\mathcal{A}_n(x_n(t))=\emptyset$, $\forall i,n\in\mathcal{N}$, with $i\neq n$, and $\|p_{m_1}(t) - p_{m_2}(t) \| \leq \min\{d_{\text{con},m_1},d_{\text{con},m_2}\}$, $\forall t\geq t_0, m\in\mathcal{M}_0$, with bounded closed loop signals.
\end{theorem}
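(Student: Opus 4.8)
The plan is to exhibit a single energy-like function whose non-increase encodes, at once, collision avoidance, connectivity preservation, signal boundedness, and progress toward $c_j$, and then to extract goal convergence through an invariance argument. First I would take the candidate
\[
V \coloneqq \sum_{i\in\mathcal{N}}\tfrac{1}{2}v_i^\top B_i v_i + \sum_{m\in\bar{\mathcal{M}}}\frac{\mu_{\text{col},m}}{\beta_{\text{col},m}(\iota_m)} + \sum_{m\in\mathcal{M}_0}\frac{\mu_{\text{con},m}}{\beta_{\text{con},m}(\eta_m)} + \tfrac{1}{2}\mu_{c,j}\|x_j-c_j\|^2 + \sum_{i\in\mathcal{N}}\frac{\tilde{a}_i^2}{2\mu_{a,i}},
\]
with the adaptation error $\tilde{a}_i \coloneqq \hat{a}_i - a_i$. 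Since by Assumption~\ref{ass:initially connected} the start is collision-free and connected, every $\iota_m(t_0)>0$ and every $\eta_m(t_0)>0$, so each barrier is finite and $V(t_0)<\infty$; propagating this finiteness is the engine of the whole argument.

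Next I would differentiate $V$ along \eqref{eq:dynamics}. The kinetic part gives $v_i^\top B_i\dot{v}_i = v_i^\top(u_i - f_i - g_i)$, and substituting \eqref{eq:control law} cancels the gravity vectors $g_i$. The central bookkeeping is that, after summing over agents, the velocity contracted with each collision and connectivity force reproduces exactly $-\mu_{\text{col},m}\beta'_{\text{col},m}\dot{\iota}_m$ and $-\mu_{\text{con},m}\beta'_{\text{con},m}\dot{\eta}_m$; these cancel the time derivatives of the barrier terms in $V$, which is precisely why each barrier must be weighted by its own gain $\mu_{\text{col},m}$, $\mu_{\text{con},m}$. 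Similarly $v_j^\top(-\mu_{c,j}(x_j-c_j))$ cancels the derivative of $\tfrac{1}{2}\mu_{c,j}\|x_j-c_j\|^2$. For the uncertainty, Assumption~\ref{ass:f_i} gives $-v_i^\top f_i \leq a_i\bar{f}_i\|v_i\|^2$, so the damping-plus-uncertainty contribution is at most $-\tilde{a}_i\bar{f}_i\|v_i\|^2 - \mu_i\|v_i\|^2$, and the adaptation law \eqref{eq:adaptation laws} makes $\tfrac{1}{\mu_{a,i}}\tilde{a}_i\dot{\hat{a}}_i = \tilde{a}_i\bar{f}_i\|v_i\|^2$ cancel the indefinite term. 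What remains is
\[
\dot{V} \leq -\sum_{i\in\mathcal{N}}\mu_i\|v_i\|^2 \leq 0.
\]

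From $\dot{V}\leq0$ I get $V(t)\leq V(t_0)<\infty$ for all $t\geq t_0$, and since every summand is non-negative each is separately bounded. Boundedness of the barrier sums keeps $\beta_{\text{col},m}$ and $\beta_{\text{con},m}$ away from zero, i.e. $\iota_m(t)$ and $\eta_m(t)$ never vanish, which by their definitions is exactly $\|x_{m_1}-x_{m_2}\|>r_{m_1}+r_{m_2}$ for every pair (collision avoidance) and $\|x_{m_1}-x_{m_2}\|<\underline{d}_{\text{con},m}$ for every $m\in\mathcal{M}_0$ (connectivity of $\mathcal{E}_0$); the same bound rules out finite escape through the barriers, so the solution exists for all $t\geq t_0$. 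The kinetic, navigation, and adaptation terms then bound $v_i$, $x_j$, and $\hat{a}_i$, and connectivity bounds the remaining $x_i$, giving bounded closed-loop signals.

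Finally, for $c_j\in\mathcal{A}_j(x_j(t_f))$ I would integrate the displayed inequality to obtain $v_i\in L_2$; boundedness of all states bounds $\dot{v}_i$, so each $v_i$ is uniformly continuous and Barbalat's lemma (equivalently LaSalle on the compact sublevel set) yields $v_i\to0$. On the largest invariant set inside $\{\dot{V}=0\}$ one has $v_i\equiv0$, hence $\dot{v}_i\equiv0$, which forces $\nabla_{x_i}V_{\mathrm{pot}}=0$ for all $i$, where $V_{\mathrm{pot}}$ gathers the two barrier sums and the attractive term. The theorem then reduces to showing that the only admissible critical configuration is $x_j=c_j$; granting that, $x_j\to c_j$ and continuity give $\|x_j(t_f)-c_j\|<r_j$ at a finite $t_f$. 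I expect this last step to be the main obstacle: excluding interior equilibria where the pull $-\mu_{c,j}(x_j-c_j)$ is balanced by collision/connectivity forces is the classical navigation-function difficulty, and it is in genuine tension with the topological fact that a domain punctured by obstacles admits no globally attracting equilibrium. I would attack it by taking the barrier gains $\mu_{\text{col},m},\mu_{\text{con},m}$ small relative to $\mu_{c,j}$, so the attraction dominates away from the safety boundaries, and by a Hessian/index computation at candidate critical points to show they are saddles rather than minimizers, thereby recovering convergence from all---or all but a measure-zero set of---feasible initial configurations.
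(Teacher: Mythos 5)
Your Lyapunov candidate is the paper's, term for term, and everything through $\dot{V}\leq-\sum_i\mu_i\|v_i\|^2$, the barrier-based collision/connectivity guarantees, signal boundedness, and the reduction via LaSalle to the set where $v\equiv 0$ and $\dot{v}\equiv 0$ matches the paper's proof. The gap is exactly where you flag it: you leave the exclusion of spurious equilibria open and propose to attack it by shrinking $\mu_{\text{col},m},\mu_{\text{con},m}$ relative to $\mu_{c,j}$ plus a Hessian/index computation. That route does not close the proof of \emph{this} theorem: at best it yields convergence from all but a measure-zero set of initial conditions, whereas the statement (and the paper's stated contribution over navigation-function methods) claims convergence from \emph{every} collision-free connected initial configuration, with no gain conditions.

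The idea you are missing is that the spurious-equilibrium problem dissolves here because of the structure of \eqref{eq:control law}: only agent $j$ carries an attractive term, every other agent is driven purely by the pairwise interaction forces, and those forces are \emph{internal} to the team (they come in action--reaction pairs, equivalently the rows of $D(\mathcal{G}_0)$ and $D(\bar{\mathcal{G}})$ sum to zero). Writing the equilibrium condition agent-by-agent, each passive agent $i\neq j$ satisfies $[d_{0,i}\otimes I_n]^\top\widetilde{\beta}_\text{con}+[\bar{d}_i\otimes I_n]^\top\widetilde{\beta}_\text{col}=0$ individually, i.e.\ the net interaction force on it vanishes. Since $d_{0,j}=-\sum_{i\neq j}d_{0,i}$ and $\bar{d}_j=-\sum_{i\neq j}\bar{d}_i$, the net interaction force on agent $j$ equals minus the sum of those on the passive agents, hence is also zero, and agent $j$'s equilibrium condition collapses to $\gamma_j(c_j)=\mu_{c,j}(x_j-c_j)=0$. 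No balance between attraction and repulsion can occur at equilibrium, for any positive gains, so $x_j\to c_j$ and finiteness of $t_f$ with $c_j\in\mathcal{A}_j(x_j(t_f))$ follows as you describe. You should replace your final paragraph's programme with this summation argument.
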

\begin{proof}
		By taking into account that $\frac{\partial \iota_m}{\partial x_{m_1}} =- \frac{\partial \iota_m}{\partial x_{m_2}}$, $\forall m\in\bar{\mathcal{M}}$, $\frac{\partial \eta_m}{\partial x_{m_1}} = -\frac{\partial \eta_m}{\partial x_{m_2}}$, $\forall m\in\mathcal{M}_0$, we can write the control laws \eqref{eq:control law} in vector form:
		\begin{align}
		u =& (D(\mathcal{G}_0)\otimes I_n)\mu_{\text{con}}\beta_{\text{con}} + (D(\bar{\mathcal{G}})\otimes I_n)\mu_{\text{col}}\beta_{\text{col}} -\gamma_{\mathsf{md}}(x)\notag \\
		&  + g - h(x)v  \label{eq:control law vector form}
		\end{align}
		where $g \coloneqq [g_1^\top,\dots,g_N^\top]^\top$, $x \coloneqq [x_1^\top,\dots,x_N^\top]^\top$, $v \coloneqq [v_1^\top,\dots,v_N^\top]^\top \in\mathbb{R}^{Nn}$, $h(x) = \text{diag}\{[\hat{a}_i\bar{f}_i(x_i)+\mu_i]_{i\in\mathcal{N}}\}\in\mathbb{R}^{Nn\times Nn}$, $\gamma_{\mathsf{md}}(x)\in\mathbb{R}^{Nn}$ is a vector of zeros except for the rows $nj,\dots, n(j+1)$, which are $\gamma_j(c_j)$; 
		$\mu_{\text{con}} \coloneqq \text{diag}\{ [\mu_{\text{con},m} I_n]_{m\in\mathcal{M}_0} \}$, $\mu_{\text{col}} \coloneqq \text{diag}\{ [\mu_{\text{col},m} I_n]_{m\in\bar{\mathcal{M}}}\}\in\mathbb{R}^{Nn\times Nn}$,   $D(\cdot)$ is the graph incidence matrix, as defined in Section \ref{sec:problem form}, and $\beta_\text{con} \coloneqq \left[\beta'_{\text{con},1}\left(\frac{\partial \eta_1}{\partial x_{1_1}}\right)^\top,\dots,\beta'_{\text{con},M_0}\left(\frac{\partial \eta_{M_0}}{\partial x_{(M_0)_1}}\right)^\top \right]^\top \in \mathbb{R}^{nM_0}$, $\beta_\text{col} \coloneqq \left[\beta'_{\text{col},1}\left(\frac{\partial \iota_1}{\partial x_{1_1}}\right)^\top,\dots,\beta'_{\text{col},\bar{M}}\left(\frac{\partial \iota_{\bar{M}}}{\partial x_{\bar{M}_1}}\right)^\top \right]^\top \in \mathbb{R}^{n\bar{M}}$.

	Consider the positive definite Lyapunov candidate $V(x,v,\hat{a}) \coloneqq\frac{\mu_{c,j}}{2}\|x_j-c_j\|^2 + \frac{1}{2}\sum_{i\in\mathcal{N}}\Big(v_i^\top B_i v_i + \frac{1}{2\mu_{a,i}}\widetilde{a}_i^2  \Big) 
	+\sum_{m\in\bar{\mathcal{M}}}  \frac{\mu_{\text{col},m}}{\beta_{\text{col},m}(\iota_m)}  +
	\sum_{m\in\mathcal{M}_0}\frac{\mu_{\text{con},m}}{\beta_{\text{con},m}(\eta_m)}$, 
	%\begin{align*}
	%	&\red{V(x,v,\hat{a})} \coloneqq\frac{\mu_{c,j}}{2}\|x_j-c_j\|^2 + \frac{1}{2}\sum\limits_{i\in\mathcal{N}}\Big(v_i^\top B_i v_i + \frac{1}{2\mu_{a,i}}\widetilde{a}_i^2  \Big) \notag \\
	%	&+\sum\limits_{m\in\bar{\mathcal{M}}}  \frac{\mu_{\text{col},m}}{\beta_{\text{col},m}(\iota_m)}  +
	%	\sum\limits_{m\in\mathcal{M}_0}\frac{\mu_{\text{con},m}}{\beta_{\text{con},m}(\eta_m)},
	%\end{align*}
	where $\hat{a} = [\hat{a}_1,\dots,\hat{a}_N]^\top\in\mathbb{R}^N$, and $\widetilde{a}_i\coloneqq\hat{a}_i-a_i$, $\forall i\in\mathcal{N}$.
	The connectedness of $\mathcal{M}_0$ and collision-free initial conditions imply the existence of a finite constant $\bar{V}$ such that $V(t_0) \leq \bar{V}$. By taking the derivative of $V$ we obtain $\dot{V} = \gamma_j(c_j)^\top v_j + \sum_{i\in\mathcal{N}}\Big\{\widetilde{a}_i\bar{f}_i(x_i)\|v_i\|^2 + v_i^\top(u_i - g_i - 
	 f_i(x_i,v_i)) \} 	- \Big(\widetilde{\beta}_\text{con}^\top(D(\mathcal{G}_0)\otimes I_n)^\top + \widetilde{\beta}_\text{col}^\top(D(\bar{\mathcal{G}})\otimes I_n)^\top\Big) v$,
	%\small
	%\begin{align*} 
	%	&\dot{V} = \gamma_j(c_j)^\top v_j + \sum_{i\in\mathcal{N}}\Big\{\widetilde{a}_i\bar{f}_i(x_i)\|v_i\|^2 + v_i^\top(u_i - g_i - \\
	%	& f_i(x_i,v_i)) \} 	- \Big(\widetilde{\beta}_\text{con}^\top(D(\mathcal{G}_0)\otimes I_n)^\top + \widetilde{\beta}_\text{col}^\top(D(\bar{\mathcal{G}})\otimes I_n)^\top\Big) v, 
	%\end{align*}
	%\normalsize
	where $\widetilde{\beta}_\text{con}\coloneqq \mu_\text{con}\beta_\text{con}$, $\widetilde{\beta}_\text{col}\coloneqq \mu_\text{col}\beta_\text{col}$. By substituting the control and adaptation laws \eqref{eq:control law vector form}, \eqref{eq:adaptation laws} and employing Assumption \ref{ass:f_i}, we obtain $\dot{V} \leq\sum_{i\in\mathcal{N}}\{\|v_i \|\|f_i(x_i,v_i)\| -\hat{a}_i \bar{f}_i(x_i)\|v_i\|^2 + \widetilde{a}_i\bar{f}_i(x_i)\|v_i\|^2 - \mu_i\|v_i\|^2 \} \leq  \sum_{i\in\mathcal{N}} \{a_i\bar{f}_i(x_i)\|v_i\|^2 -\hat{a}_i \bar{f}_i(x_i)\|v_i\|^2 + \widetilde{a}_i\bar{f}_i\|v_i\|^2 - \mu_i\|v_i\|^2\} =  -\sum_{i\in\mathcal{N}}\mu_i\|v_i\|^2$. Hence, we conclude that $\dot{V} \leq 0$, which implies that $V(t) \leq V(t_0) \leq \bar{V}$. Therefore, we conclude that  $\beta_{\text{col},m}(\iota_m) \geq \frac{\mu_{\text{col},m}}{\bar{V}}$ $\beta_{\text{con},m}(\eta_m) \geq \frac{\mu_{\text{con},m}}{\bar{V}}$, and 
	$\|x_j - c_j\| \leq \frac{2\bar{V}}{\mu_{c,j}}$, i.e., the boundedness of $x_j$ (since $c_j$ is finite), the boundendess of $v_i, \hat{a}_i$, $\forall i\in\mathcal{N}$, as well as that the multi-agent trajectory is free of collisions and connectivity breaks, $\forall t\geq t_0$. Since the multi-agent system stays connected and $x_j$ is bounded, we conclude that the rest $x_i$, $i\in\mathcal{N}\backslash\{j\}$ are also bounded, $\forall t\geq t_0$. Moreover, by invoking LaSalle's invariance principle, we conclude that the system will converge to the largest invariant set contained in $\mathbb{L} \coloneqq \{ (x,v,\hat{a})\in\mathbb{R}^{2Nn} : v_i = 0, \forall i\in\mathcal{N}\}$, which is the set $\widetilde{\mathbb{L}} \coloneqq \{ ((x,v,\hat{a}))\in\mathbb{R}^{2Nn} : \dot{v}_i = 0,v_i = 0, \forall i\in\mathcal{N}\}$. By considering the closed loop system \eqref{eq:dynamics}-\eqref{eq:control law vector form} and taking into account the positive definiteness of $B_i$, we conclude that the system will converge to the configuration 
	%$\widetilde{\mathbb{L}} = \{(x,v)\in\mathbb{R}^{6N} : (D(\mathcal{G}_0)\otimes I_3)\mu_{\text{con}}\beta_{\text{con}} + (D(\bar{\mathcal{G}})\otimes I_3)\mu_{\text{col}}\beta_{\text{col}} -\gamma_{\mathsf{md}}(x)  =0  \}$. 
	%It can be proven that the nullspace of $\widetilde{E}_\zeta(\cdot)$ is empty, and since $\zeta_i$ are unit quaternions, we conclude that the system will converge to the configuration where
%	\begin{align*}
%		\dot{V} \leq& \sum\limits_{i\in\mathcal{N}}\Big\{\|v_i \|\|f_i(x_i,v_i)\| -\hat{a}_i \bar{f}_i(x_i)\|v_i\|^2 + \widetilde{a}_i\bar{f}_i(x_i)\|v_i\|^2 - \mu_i\|v_i\|^2 \Big\} \\
%		& \sum\limits_{i\in\mathcal{N}} a_i\bar{f}_i(x_i)\|v_i\|^2 -\hat{a}_i \bar{f}_i(x_i)\|v_i\|^2 + \widetilde{a}_i\bar{f}_i\|v_i\|^2
%	\end{align*}	
	\begin{equation}
		(D(\mathcal{G}_0)\otimes I_n)\widetilde{\beta}_\text{con}+ (D(\bar{\mathcal{G}})\otimes I_n)\widetilde{\beta}_\text{col} - \gamma_{\mathsf{md}}(x) =0. \label{eq:LaSalle}
	\end{equation}
	Note that $\mathcal{G}_0$ and $\bar{\mathcal{G}}$ are connected graphs, and hence the sum of the rows of $D(\mathcal{G}_0)$ and $D(\bar{\mathcal{G}})$ is zero. In particular, let $D(\mathcal{G}_0) = [d_{0,1},\dots,d_{0,N}]^\top$, $D(\bar{\mathcal{G}})= [\bar{d}_1,\dots, \bar{d}_{N}]^\top$, where $d^\top_{0,i}\in\mathbb{R}^{M_0}$, $\bar{d}^\top_i\in\mathbb{R}^{\bar{M}}$, $i\in\mathcal{N}$,  are the rows of $D(\mathcal{G}_0)$ and $D(\bar{\mathcal{G}})$, respectively. Then it holds that $\sum_{i\in\mathcal{N}}d_{0,i} = \sum_{i\in\mathcal{N}}\bar{d}_i = 0$. We can then write $D(\mathcal{G}_0)\otimes I_n = [d_{0,1}\otimes I_n,\dots,d_{0,N}\otimes I_n]^\top $, $D(\bar{\mathcal{G}})\otimes I_n= [\bar{d}_1\otimes I_n,\dots, \bar{d}_{N}\otimes I_n]^\top$   	
	 %(and thus of $D(\mathcal{G}_0)\otimes I_7$ and $D(\bar{\mathcal{G}})\otimes I_7$) equals to zero. Let $D(\mathcal{G}_0)\otimes I_7 = [d_{0,1},\dots, d_{0,7N}]^\top$, $D(\bar{\mathcal{G}})\otimes I_7 = [\bar{d}_1,\dots, \bar{d}_{7N}]^\top$, where $d^\top_{0,\ell}\in\mathbb{R}^{7M_0}$, $\bar{d}^\top_\ell\in\mathbb{R}^{7\bar{M}}$, $\ell\in\widetilde{\mathcal{N}}\coloneqq\{1,\dots,7N\}$,  are the rows of $D(\mathcal{G}_0)\otimes I_7$ and $(\bar{\mathcal{G}})\otimes I_7$, respectively. Then it holds that $\sum_{\ell\in\widetilde{\mathcal{N}}}d_{0,\ell} = \sum_{\ell\in\widetilde{\mathcal{N}}}\bar{d}_\ell = 0$, and, therefore, $\sum_{\ell\in\widetilde{\mathcal{N}}}\widetilde{d}_\ell = 0$, where $\widetilde{d}_\ell \coloneqq [d^\top_{0,\ell}, \bar{d}_\ell]^\top\in\mathbb{R}^{14M_0+7\bar{M}}$. 	 
	and hence \eqref{eq:LaSalle} becomes
	\begin{subequations}\label{eq:LaSalle element form}
	\begin{align}
		& [d_{0,i}\otimes I_n]^\top \widetilde{\beta}_\text{con} + [\bar{d}_i\otimes I_n]^\top \widetilde{\beta}_\text{col} = 0, \  \forall i \in \mathcal{N}\backslash\{j\}, \label{eq:LaSalle element form 1} \\
		& \gamma_j(c_j) - [d_{0,j}\otimes I_n]^\top \widetilde{\beta}_\text{con} - [\bar{d}_j\otimes I_n]^\top \widetilde{\beta}_\text{col} = 0.  \label{eq:LaSalle element form 2}
	\end{align}
	\end{subequations}
	From \eqref{eq:LaSalle element form 2} we obtain that $\gamma_j(c_j) - [(-\sum_{i\in\mathcal{N}\backslash\{j\}} d_{0,i} )\otimes I_n]^\top\widetilde{\beta}_\text{con} - [(-\sum_{i\in\mathcal{N}\backslash\{j\}} \bar{d}_i )\otimes I_n]^\top\widetilde{\beta}_\text{col} = 0$, which implies $\gamma_j(c_j) + \sum_{i\in\mathcal{N}\backslash\{j\}}\{ [d_{0,i}\otimes I_n]^\top\widetilde{\beta}_\text{con} + [\bar{d}_i\otimes I_n]^\top\widetilde{\beta}_\text{col} \} = 0$ and in view of \eqref{eq:LaSalle element form 1},  $\gamma_j(c_j)= 0$. Therefore, it holds that $\lim_{t\to\infty}x_j(t) = c_j$, which implies that, for every $\varepsilon$, there exists a $t_f > t_0$ such that $\|x_j(t) - c_j\| < \varepsilon, \forall t \geq t_f$. Hence, since $x_j$ is the center of $\mathcal{A}_j(x_j)$, we conclude that there exists a finite $t_f$ such that $c_j\in\mathcal{A}_j(x_j(t_f))$, which leads to the conclusion of the proof.
%	\begin{align}
%		&\gamma_j(c_j) - [(-\sum\limits_{i\in\mathcal{N}\backslash\{j\}} d_{0,i} )\otimes I_7]^\top\widetilde{\beta}_\text{con} - \notag \\ &[(-\sum\limits_{i\in\mathcal{N}\backslash\{j\}} \bar{d}_i )\otimes I_7]^\top\widetilde{\beta}_\text{col} = 0  \Leftrightarrow \notag \\
%		&\gamma_j(c_j) + \sum\limits_{i\in\mathcal{N}\backslash\{j\}}\Big\{ [d_{0,i}\otimes I_7]^\top\widetilde{\beta}_\text{con} + [\bar{d}_i\otimes I_7]^\top\widetilde{\beta}_\text{col} \Big\} = 0
%	\end{align}	
\end{proof}

%\begin{remark}
%\red{ Note that, compared to the potential field defined in [CITE MENG], the proposed control laws guarantee the navigation of agent $j$ to point $c_j$ under \textit{all} collision-free and connected initial conditions.  }
%\end{remark}

\subsection{Hybrid Control Strategy} \label{subsec:hybrid strategy}
In this section, we propose a decentralized switching strategy for each agent to decide on its own activity or passivity. Through this strategy, we integrate the discrete plan execution from Section \ref{subsec:discrete plan synthesis} and the continuous control scheme from Section \ref{subsec:continuous control} into a hybrid control scheme, which monitors the plan execution online. The desired plans for the agents, from Section \ref{subsec:discrete plan synthesis}, are  $\mathsf{plan}_i \coloneqq \ (c_{i1^\text{G}}, \sigma_{i1^\text{G}}) \ \dots \ (c_{il_i^\text{G}}, \sigma_{il_i^\text{G}}) \ ((c_{i(l_i+1)^\text{G}}, \sigma_{i(l_i+1)^\text{G}})\dots$ $(c_{iL_i^\text{G}}, \sigma_{iL_i^\text{G}}))^\mathsf{\omega}$, i.e., agent $i\in\mathcal{N}$, has to pass through the points $c_{i1^\text{G}}, \dots, c_{iL_i^\text{G}}$ and provide the corresponding services $\sigma_{i1^\text{G}},\dots,\sigma_{iL_i^\text{G}}$, which satisfy formula $\phi_i$, i.e, $\sigma_{i1^\text{G}}\dots\sigma_{il_i^\text{G}}(\sigma_{i(l_1+1)^\text{G}} \sigma_{iL_i^\text{G}})^\mathsf{\omega}\models \phi_i$. 

Let each agent have a counter variable $s_i$ initiated at $s_i = 1$, as well as a cycle counter $\kappa_i$, initiated at $\kappa_i = 1$, $\forall i\in\mathcal{N}$. Then, given the agent priority variables $\mathsf{pr}_i$, each agent executes $u_i = u_i(c_{is_i^\text{G}},1)$ if $\kappa_i = \mathsf{pr}_i$ and $u_i(c_{is_i^\text{G}},0)$ if $\kappa_i \neq \mathsf{pr}_i$. The agents update the cycle counter $\kappa_i$ every time the current active agent reaches its desired point, and the variable $s_i$ every time they reach their current desired point. Each agent provides the services $\sigma_{il^\text{G}}$ if $c_{il^\text{G}}\in\mathcal{A}_i(x_i)$ and $\kappa_i = \mathsf{pr}_i$, otherwise he does not provide any services.  More specifically, we construct the following algorithm: 
%\begin{algorithm}[H]
%\caption{Hybrid Control Strategy}
% \begin{algorithmic}[1]
%%\\ \textit{Initialization}:\\
%\STATE $\kappa_i = 1, s_i = 1$, $\forall i\in\mathcal{N}$
%\FOR {$i\in\mathcal{N}$}
%\IF {$\kappa_i = \mathsf{pr}_i$}
%\STATE $\mathsf{cur} = i$, \hspace{3mm} $u_i = u_i(c_{is_i^\text{G}},1)$
%%\STATE $u_i = u_i(c_{is_i^\text{G}},1)$, 
%\ELSE 
%\STATE $u_i=u_i(c_{is_i^\text{G}},0)$, 
%\ENDIF
%\ENDFOR
%\FOR {$i\in\mathcal{N}$}
%\IF	{$c_{s_{\mathsf{cur}}}\in\mathcal{A}_\mathsf{cur}(x_{\mathsf{cur}})$}
%\STATE Agent $\mathsf{cur}$ provides services $\sigma_{is_i^\text{G}}$
%\STATE $\kappa_i = (\kappa_i + 1)\mod N$ 
%\IF	{$s_\mathsf{cur} < \red{L^\text{G}_\mathsf{cur}}$}
%\STATE $s_\mathsf{cur} = s_\mathsf{cur} + 1$
%\ELSE	   
%\STATE $s_\mathsf{cur} = (s_\mathsf{cur} + 1)\mod \red{L^\text{G}_\mathsf{cur} + l^\text{G}_\mathsf{cur}}$
%\ENDIF
%\ENDIF
%%\STATE $u_j = u_j(c_{j1^\text{G}},1), \text{ for } j:\mathsf{pr}_j = 1$, $u_i=u_j(c_{j1^\text{G}},0), $
%\ENDFOR
%\end{algorithmic}
%\end{algorithm}

\begin{algorithm}	
		%\\ \textit{Initialization}:\\
		$\kappa_i \leftarrow 1, s_i \leftarrow 1$, $\forall i\in\mathcal{N}$\\
		\For{$i\in\mathcal{N}$}{
			\If{$\kappa_i = \mathsf{pr}_i$}
				{$\mathsf{cur} \leftarrow i$, \hspace{3mm} $u_i \leftarrow u_i(c_{is_i^\text{G}},1)$}
			\Else{
				$u_i \leftarrow u_i(c_{is_i^\text{G}},0)$}}
		\For{$i\in\mathcal{N}$}{
			\If	{$c_{s_{\mathsf{cur}}}\in\mathcal{A}_\mathsf{cur}(x_{\mathsf{cur}})$}
				{Agent $\mathsf{cur}$ provides services $\sigma_{is_i^\text{G}}$\\
				$\kappa_i \leftarrow (\kappa_i + 1)\mod N$ \\
			\If	{$s_\mathsf{cur} < L^\text{G}_\mathsf{cur}$}{
				$s_\mathsf{cur} \leftarrow s_\mathsf{cur} + 1$}
			\Else	   
				{$s_\mathsf{cur} \leftarrow (s_\mathsf{cur} + 1)\mod L^\text{G}_\mathsf{cur} + l^\text{G}_\mathsf{cur}$}}}
		%\STATE $u_j = u_j(c_{j1^\text{G}},1), \text{ for } j:\mathsf{pr}_j = 1$, $u_i=u_j(c_{j1^\text{G}},0), $		
		\caption{Hybrid Control Strategy}
\end{algorithm}

Loosely speaking, agent $i$ provides the services $\sigma_{is^{\text{G}}}$ only if $c_{is_i^\text{G}}\in\mathcal{A}_i(x_i)$, i.e., if it is in the respective desired point of interest, and $\kappa_i = \mathsf{pr}_i$, i.e., it is its turn to be active. As soon as it reaches the point and provides the services, it updates its progressive goal index $s_i$, and everyone in the team updates the cycle counter $\kappa_i$, so that another agent becomes active. Note that the agents need to know when the current agent reaches its progressive goal and provides its services so that they update the counter variable $\kappa_i$. To that end, the current agent can simply communicate this information as soon as it provides its services. Since the communication graph is always connected, the information can propagate to all agents. Note that potential time delays in this inter-agent communication do not affect the overall strategy. 
%The first agent to switch its control protocol is the one that switches from active ($u_i(\cdot,1)$) to passive mode ($u_i(\cdot,1)$)
A communication-free solution could be the use of state and input estimators along with
the discontinuous change of the control law of the current agent \cite{guo2016communication}.

%That can achieved 
%by employing the continuous control analysis of Section \ref{subsec:continuous control}. In particular, note that when an agent approaches its goal, i.e. $p_j(t) \to c_{js_j^\text{G}}$ for some $j\in\mathcal{N}$, LaSalle's invariance principle suggests that all the agents' velocities converge asymptotically to zero, i.e., $v_i(t) \to 0$, $\forall i\in\mathcal{N}$, which implies that the agent converge to a local minima. Hence, a possible trigger strategy for the agents is to measure the time duration that their velocity is below a certain threshold $\varepsilon_i$, and update their counters $\kappa_i$ and move to the next cycle as soon as this duration exceeds a specific constant $T_i$. In other words, let $\|v_i(t_i^-)\| > \varepsilon_i$, and $\| v_i(t) \| \leq \varepsilon_i$, $\forall t\geq t_i$. Then $t = T_i+t_i \Rightarrow \kappa_i =(\kappa_i+1)\mod N$, i.e., update $\kappa_i$ the time instant $T_i+t_i$. Note that $T_i$ and $\varepsilon_i$ need to be sufficiently large and small, respectively, to correspond to the aforementioned local minima. Note also that there's no need for the agents to update their $\kappa_i$ and move to the next cycle simultaneously. 

In that way, all the agents eventually reach their goal points of interest and provide the corresponding services. More specifically, the resulting time trajectory of each agent yields the behavior $\beta_i = (c_{i1},\sigma_{i1})(c_{i2},\sigma_{i2})\dots$, and the desired behavior $\mathsf{plan}_i$ is a subsequence of $\beta_i$, with $\sigma_{i\ell} = \emptyset$, $\forall \ell : \sigma_{i\ell}\neq \sigma_{i\ell^\text{G}}$, i.e., agent $i$ does not provide any services in unplanned crossing of points of interest (while navigating to a desired point or being in passive mode), 
providing only the desired services at the corresponding desired points. 

\begin{figure}[t!]
	\centering
	\includegraphics[scale = 0.48, trim = 0cm 0cm 0cm -0.5cm]{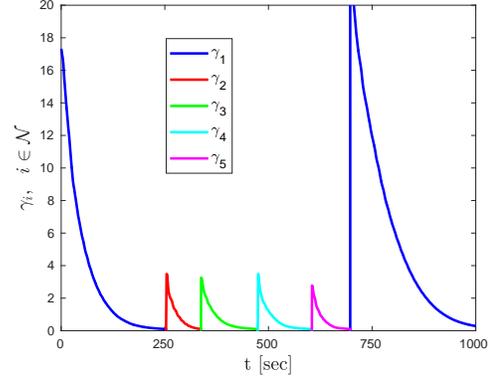}
	\caption{The distance errors $\mathsf{md}_i\gamma_i$, $\forall i\in\mathcal{N}$, $t\in[0,10^3]$.}
	\label{fig:gammas}
\end{figure}

\begin{figure}[t!]
	\centering
	\includegraphics[scale = 0.48, trim = 0cm 0cm 0cm 0.35cm]{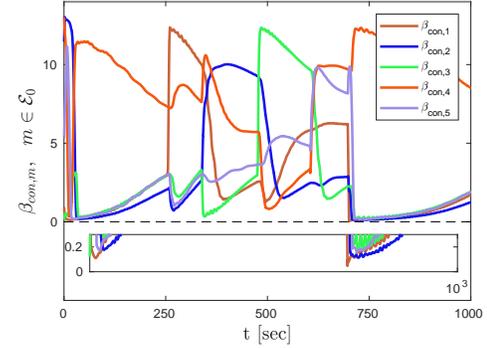}
	\caption{The functions $\beta_{\text{con},m}(\eta_m)$, $\forall m\in\mathcal{E}_0$, $t\in[0,10^3]$.}
	\label{fig:beta_con}
\end{figure}

\begin{figure}[t!]
	\centering
	\includegraphics[scale = 0.48,trim = 0cm 0cm 0cm -0.5cm]{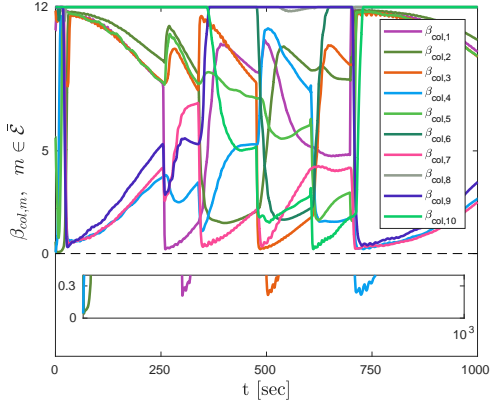}
	\caption{The functions $\beta_{\text{col},m}(\iota_m)$, $\forall m\in\bar{\mathcal{E}}$, $t\in[0,10^3]$.}
	\label{fig:beta_col}
\end{figure}

\begin{figure}[t!]
	\centering
	\includegraphics[scale = 0.48, trim = 0cm 0cm 0cm 0.35cm]{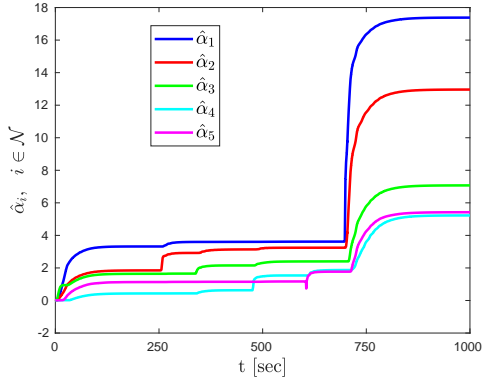}
	\caption{The adaptation signals $\hat{\alpha}_i$, $\forall i\in\mathcal{N}$, $t\in[0,10^3]$.}
	\label{fig:a_hats}
\end{figure}
%%%%%%%%%%%%%%%%%%%%%%%%%%%%%%%%%%%%%%%%%%%%%%%%%%%%%%%%%%%%%%%%%%%%%%%%%%%%%%%%
\section{SIMULATION RESULTS}\label{sec:Simulation}
We consider $N = 5$ holonomic spherical agents in $\mathbb{R}^3$, with $r_i=1\text{m}$, $d_{\text{con},i} = 4\text{m}$, priorities as $\mathsf{pr}_i = i$, $\forall i\in\mathcal{N}$, and initial positions $x_1 = [0,0,0]^\top\text{m}$, $x_2 = [-2.1,-2.3,2]^\top\text{m}$, $x_3 = [1.3,1.3,1.5]^\top\text{m}$, $x_4 = [-2,3.25,2.2]^\top\text{m}$, $x_5 = [2,2.4,-0.15]^\top\text{m}$, which give the edge set $\mathcal{E}_0 = \{(1,2),(1,3),(3,4),(3,5),(1,5)\}$. The complete edge set is $\bar{\mathcal{E}}$ $=$ $\{(1,2)$, $(1,3)$, $(3,4)$, $(3,5)$, $(1,5)$, $(1,4)$, $(2,3)$, $(2,4)$, $(2,5)$, $(4,5)\}$. We choose $B_i = b_{m_i} I_3$ and $f_i(x_i,v_i) = \alpha_i\|x_i\|\sin(w_{i,1}t + w_{i,2})v_i$, with $b_{m_i}, w_{i,1}$, $w_{i,2}$ randomly chosen in the interval $(1,2)$, $\forall i\in\mathcal{N}$. 
The points of interest are $c_1 = [10,10,10]^\top \text{m}$, $c_2 = [-5,0,5]^\top \text{m}$, $c_3 = [5,-2,-7]^\top \text{m}$, $c_4 = [0,-6,2]^\top \text{m}$. For simplicity, we consider that each agent can provide the services $\Psi_i = \{``\mathsf{r}_i",``\mathsf{b}_i",``\mathsf{g}_i",``\mathsf{m}_i"\}$, $\forall i\in\mathcal{N}$, and $\mathcal{L}_i(c_1) = \{``\mathsf{r}_i"\}$, $\mathcal{L}_i(c_2) = \{``\mathsf{b}_i"\}$, $\mathcal{L}_i(c_3) = \{``\mathsf{g}_i"\}$, $\mathcal{L}_i(c_4) = \{``\mathsf{m}_i"\}$ $\forall i\in\mathcal{N}$. The LTL formulas were taken as $\phi_1 = \square\lozenge(``\mathsf{r}_1"\land``\mathsf{r}_1"\bigcirc ``\mathsf{g}_1"\bigcirc \mathsf{m}_1 \bigcirc ``\mathsf{b}_1")$, $\phi_2 = \lozenge``\mathsf{m}_2"\land \square\lozenge(``\mathsf{r}_2"\land``\mathsf{b}_2")$, $\phi_3 = \lozenge``\mathsf{m}_3"\land \square\lozenge(``\mathsf{r}_3"\land``\mathsf{b}_3")$, $\phi_4 = \square\lozenge(``\mathsf{g}_4"\land``\mathsf{g}_4"\bigcirc ``\mathsf{b}_4"\bigcirc \mathsf{m}_4 \bigcirc ``\mathsf{g}_4")$, and $\phi_5 = ``\mathsf{r}_5"\land \square\lozenge(``\mathsf{b}_5"\land ``\mathsf{m}_5"\bigcirc ``\mathsf{g}_5")$. By following the procedure described in Section \ref{subsec:discrete plan synthesis}, we obtain the desired plans $\mathsf{plan}_1 = ((c_1,``\mathsf{r}_1")(c_3,``\mathsf{g}_1")(c_4,``\mathsf{m}_1")(c_2,``\mathsf{b}_1"))^\mathsf{\omega}$, $\mathsf{plan}_2 = (c_2,``\mathsf{b}_2")(c_4,``\mathsf{m}_2")((c_1,``\mathsf{r}_2")(c_2,``\mathsf{b}_2"))^\mathsf{\omega}$, $\mathsf{plan}_3 = (c_4,``\mathsf{m}_3")(c_3,``\mathsf{g}_3")((c_1,``\mathsf{r}_3")(c_2,``\mathsf{b}_3"))^\mathsf{\omega}$, $\mathsf{plan}_4 = ((c_3,``\mathsf{g}_4")(c_2,``\mathsf{b}_4")(c_4,``\mathsf{m}_4")(c_3,``\mathsf{g}_4"))^\mathsf{\omega}$, and $\mathsf{plan}_5 =  (c_1,``\mathsf{r}_5")((c_4,``\mathsf{m}_5")(c_3,``\mathsf{g}_5")(c_2,``\mathsf{b}_5"))^\mathsf{\omega}$. We assume that the services are provided instantly by the agents. The control gains are chosen as $\mu_{c,i} = 3$, $\mu_i = 25$,  $\mu_{\alpha,i} = 0.1$, $\forall i\in\mathcal{N}$, and $\mu_{\text{con},m} = \mu_{\text{col},m} = 0.1$, $\forall m\in\mathcal{M}_0$, $m\in\bar{\mathcal{M}}$. The simulation results are depicted in Fig. \ref{fig:gammas}-\ref{fig:a_hats}. for $t\in[0,10^3]\text{sec}$. More specifically, Fig. \ref{fig:gammas} shows the distance functions $\mathsf{md}_i\gamma_i$, $\forall i\in\mathcal{N}$. In the total time duration, all the agents execute their first goal of their respective plans, according to their assigned priorities, whereas agent $1$ executes its second goal as well; Fig. \ref{fig:beta_col} and \ref{fig:beta_con} illustrate the collision- and connectivity- associated terms $\beta_{\text{col}_m}(\iota_m)$, $\forall m\in\bar{\mathcal{M}}$, $\beta_{\text{con}_m}(\eta_{m})$, $\forall m\in\mathcal{M}_0$, which are always positive, verifying the collision avoidance and connectivity maintenance properties. Finally, Fig. \ref{fig:a_hats} depicts the adaptation variables $\hat{\alpha}_i$, $\forall i\in\mathcal{N}$, which are always kept bounded.

%%%%%%%%%%%%%%%%%%%%%%%%%%%%%%%%%%%%%%%%%%%%%%%%%%%%%%%%%%%%%%%%%%%%%%%%%%%%%%%
\section{CONCLUSIONS AND FUTURE WORKS} \label{sec:Conclusion}
This paper presented a hybrid coordination strategy for the motion planning of a multi-agent team under high level specifications expressed as LTL formulas. Inter-agent collision avoidance and connectivity maintenance is also guaranteed by the proposed continuous control protocol. Future efforts will be devoted towards addressing timed temporal tasks as well as including workspace obstacles.

%%%%%%%%%%%%%%%%%%%%%%%%%%%%%%%%%%%%%%%%%%%%%%%%%%%%%%%%%%%%%%%%%%%%%%%%%%%%%%%%
\bibliographystyle{ieeetr}
\bibliography{references}

\end{document}